\newtheorem{proposition}{Proposition}
\newtheorem{lemma}{Lemma}
\newtheorem{theorem}{Theorem}
\theoremstyle{definition}
\newtheorem{remark}{Remark}
\newcommand{\Drond}{\mathcal{D}}
\newcommand{\sYj}{s_{Y_j}}
\newcommand{\sXj}{s_{X_j}}
\newcommand{\SUj}{S_{U_j}}
\begin{document}

\title{Perfect graphs with polynomially computable kernels}

\author{Ad\`ele Pass-Lanneau}
\address{A. Pass-Lanneau and F. Meunier -- Universit\'e Paris-Est, CERMICS (ENPC), 6-8 avenue Blaise Pascal, 77455 Marne-la-Vall\'ee, France}
\email{adele.pass-lanneau@polytechnique.edu}
\email{frederic.meunier@enpc.fr}

\author{Ayumi Igarashi}
\address{A. Igarashi -- University of Oxford, Department of Computer Science, United Kingdom}
\email{ayumi.igarashi@cs.ox.ac.uk}

\author{Fr\'ed\'eric Meunier}


\begin{abstract}
In a directed graph, a kernel is a subset of vertices that is both stable and absorbing. Not all digraphs have a kernel, but a theorem due to Boros and Gurvich guarantees the existence of a kernel in every clique-acyclic orientation of a perfect graph.
However, an open question is the complexity status of the computation of a kernel in such a digraph. Our main contribution is to prove new polynomiality results for subfamilies of perfect graphs, among which are claw-free perfect graphs and chordal graphs.
Our results are based on the design of kernel computation methods with respect to two graph operations: clique-cutset decomposition and augmentation of flat edges.
We also prove that deciding the existence of a kernel -- and computing it if it exists -- can be done in polynomial time in any orientation of a chordal or a circular-arc graph, even not clique-acyclic.
\end{abstract}

\keywords{Chordal graph; claw-free perfect graph; clique-cutset; kernel; orientation of perfect graph; the Boros-Gurvich theorem}

\maketitle

\section{Introduction}
A subset $S$ of vertices of a digraph $D=(V,A)$ is \emph{stable} if it contains no pair of adjacent vertices,
 and \emph{absorbing} if for any vertex $u$ not in $S$, there is a vertex $v \in S$ such that the arc $(u,v)$ exists in $D$. A \emph{kernel} is a subset of vertices that is both stable and absorbing. Kernels have been introduced in 1944 by Von Neumann and Morgenstern \cite{VonNeumannMorgenstern} as a tool for
the study of positional or Nim-type games. Since then, other applications in game theory have
been found \cite{BorosGurvichAppli}. Kernels also play a role in graph theory: they are for instance
at the heart of Galvin's proof of Dinitz's conjecture on list coloring \cite{Galvin}. Not all digraphs admit a kernel: for example a directed cycle of length three has no kernel. It has been shown by Chv\'atal \cite{ChvatalNPC} that deciding if a digraph has a kernel is NP-complete. It is even NP-complete in planar graphs with small degree \cite{Fraenkel}.

An \emph{orientation} of an undirected graph $G$ is a digraph $D$ obtained by orienting every edge of $G$ in only one direction. A {\em super-orientation} of $G$ is a digraph $D$ obtained by orienting each edge of $G$ either in one direction or in both directions. An orientation or a super-orientation of a graph is \emph{clique-acyclic} if no clique has a directed cycle consisting of arcs oriented in only one direction, or equivalently, if every clique has a \emph{sink}, i.e., a vertex absorbing all other vertices of the clique. Note that in case of super-orientations, a clique may have several sinks.

One of the main results in the field is a theorem proved in 1996 by Boros and Gurvich~\cite{BorosGurvich} -- originally conjectured by Berge and Duchet in 1980 -- which states that every clique-acyclic super-orientation of a perfect graph has a kernel. (We remind the reader that a graph is {\em perfect} if the clique number is equal to the chromatic number for this graph and for all its induced subgraphs.) The original proof was quite involved and relied on advanced notions of game theory. A simpler and much more direct proof based on Scarf's lemma was later proposed by Aharoni and Holzman~\cite{AhHo} and further simplified by Kir\'aly and Pap~\cite{KiralyPap} using Sperner's lemma. However, none of these proofs provides any efficient way of computing a kernel. The question of the complexity of kernel computation in a clique-acyclic super-orientation of a perfect graph has been identified as a challenging open problem~\cite{Egres}.

Polynomial-time algorithms to compute kernels are known in some special cases: for instance, digraphs without directed cycle (``select the sinks and recurse''), digraphs without odd directed cycle (Richardson's algorithm~\cite{Richardson}), clique-acyclic orientations of permutation graphs~\cite{Abbas}, and clique-acyclic super-orientations of line-graphs of bipartite graphs. As noted by Maffray~\cite{maffray1992kernels}, in this latter case  kernels coincide with stable matchings in bipartite graphs, computable through the celebrated algorithm of Gale and Shapley~\cite{GaleShapley}. Permutation graphs and line-graphs of bipartite graphs are perfect. With the result by Durand de Gevigney et al.~\cite{Meunier} about so-called DE graphs (see Section~\ref{sec:inter}), this is almost all that is known about kernel computation in clique-acyclic super-orientations of perfect graphs. 

We contribute to the literature with the following two theorems, which address the problem for two special classes of perfect graphs: claw-free perfect graphs and chordal graphs. A graph is {\em claw-free} if no three vertices with a common neighbor form a stable set. 
A graph is {\em chordal} if there is no induced cycle of length $4$ or more.

\begin{theorem}\label{clawFree}
Kernel computation is polynomial in clique-acyclic orientations of claw-free perfect graphs.
\end{theorem}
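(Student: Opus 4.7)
The plan is to leverage the structure theorem for claw-free perfect graphs due to Chv\'atal--Sbihi and refined by Maffray--Reed: every such graph is obtained by repeatedly gluing along clique-cutsets two kinds of basic pieces, namely \emph{elementary} graphs and a finite list of \emph{peculiar} graphs, where an elementary graph is itself obtained from the line graph of a bipartite multigraph by a sequence of augmentations of flat edges (edges lying in no triangle). The two graph operations the abstract refers to --- clique-cutset decomposition and flat-edge augmentation --- are precisely the operations appearing in this structure theorem, so the natural strategy is to reduce kernel computation through each of them until only base cases remain.

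First I would prove a reduction lemma for clique-cutsets: if $K$ is a clique-cutset separating $G$ into $G_1$ and $G_2$ (both containing $K$), then for any clique-acyclic orientation $D$ of $G$, a kernel of $D$ can be computed in polynomial time from kernels of suitable clique-acyclic orientations of $G_1$ and $G_2$. The idea is to identify the sink $v$ of $K$ in $D$, to find on one side a kernel containing $v$ (whose existence follows by applying Boros--Gurvich to a slight modification of $D[G_1]$ forcing $v$ to absorb $K$), and then to delete on the other side the out-neighbors of $v$ before recursing. One must check carefully that clique-acyclicity is preserved by the constructions fed to the recursive calls.

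Next I would prove a reduction lemma for flat-edge augmentation: if $G'$ is obtained from $G$ by replacing a flat edge $uv$ by two cliques $U$ and $V$ joined by all possible edges, then a kernel in any clique-acyclic orientation $D'$ of $G'$ can be computed from a kernel in a naturally associated orientation of $G$. Because $uv$ lies in no triangle, the neighborhoods of $U$ and $V$ outside the augmented part are disjoint, so one can determine from the arcs between $U$ and $V$ whether the kernel must meet $U$, meet $V$, or split across the interface, and then solve a small local problem to pick which vertices enter the kernel. The base cases reduce to line graphs of bipartite multigraphs, where kernels coincide with stable matchings and can be computed by Gale--Shapley (as noted by Maffray), and to the finitely many peculiar graphs, which can be handled by brute force.

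The main obstacle I expect is the clique-cutset step: one must \emph{simultaneously} realize kernels on $G_1$ and $G_2$ that agree on $K$, and prove that the auxiliary orientations used to force a prescribed sink in $K$ remain clique-acyclic so that Boros--Gurvich applies on each side. Once that is settled, the flat-edge augmentation reduction should be relatively routine thanks to the absence of triangles through $uv$, and combining both lemmas with a polynomial-time Chv\'atal--Sbihi--Maffray--Reed style decomposition yields the theorem.
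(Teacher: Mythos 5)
Your overall architecture matches the paper's: clique-cutset decomposition plus flat-edge augmentation reduces everything to line graphs of bipartite multigraphs (Gale--Shapley) and to pieces of bounded stability number (brute force), and your augmentation step is in the right spirit -- the paper makes it precise by keeping only the sinks $s_X$, $s_Y$ of the two cliques (plus possibly one extra vertex $s_U$) and showing that any kernel of the induced subdigraph on the remaining vertices is already a kernel of the whole digraph. However, your clique-cutset reduction has a genuine gap, and it is exactly the obstacle you flag without resolving. The idea of taking the sink $v$ of the cutset $C$, forcing a kernel containing $v$ on one side, and deleting the out-neighbors of $v$ on the other side does not work: a kernel of $D$ may meet $C$ in a vertex other than the sink, or in no vertex at all. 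For a concrete failure, orient the path $a\,c_1\,c_2\,b$ with arcs $(c_1,c_2)$, $(c_1,a)$, $(c_2,b)$ and cutset $C=\{c_1,c_2\}$: the only kernel is $\{a,b\}$, which avoids $C$ entirely, so no gluing built around a kernel through the sink $c_2$ can succeed. Moreover, there is in general no ``slight modification'' of $D[G_1]$ that forces a prescribed vertex into a kernel while staying clique-acyclic, so Boros--Gurvich does not give you the one-sided kernel you want.

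The paper circumvents this with a different and non-obvious device. On the piece $D[B\cup C]$ it computes not one kernel but a sequence $K_1,\ldots,K_{|C|+1}$, where $K_i$ is a kernel of $D\bigl[B\cup\bigl(C\setminus\bigcup_{j<i}K_j\bigr)\bigr]$, i.e., the clique vertices already used are progressively deleted. Since the deleted sets form a non-decreasing chain of subsets of $C$, some step satisfies $C\cap K_k=\varnothing$ by pigeonhole. One then solves the other side on $B'$ together with only those cutset vertices that appeared in some $K_j$, and a short case analysis shows that the resulting kernel $K$ can be married to one of the $K_i$ (either the one with $K_i\cap C=\varnothing$, or the one containing the unique vertex of $K\cap C$). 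This removes any need to guess in advance which vertex of $C$, if any, lies in the final kernel, and it is what makes the recursion polynomial. Two further details you would also need: the class of pieces must be closed under taking induced subgraphs (since the lemma requires kernels of $D[B\cup(C\setminus X_i)]$ for varying $X_i$), which is why the paper invokes the second half of the Chv\'atal--Sbihi/Maffray--Reed statement; and the non-elementary basic pieces are not a finite list of graphs but an infinite class with stability number at most $9$, which is why brute force over all subsets of size at most $9$ is the correct base case.
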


\begin{theorem}\label{ChordalCASuper}
Kernel computation is polynomial in clique-acyclic super-orien\-ta\-tions of chordal graphs.
\end{theorem}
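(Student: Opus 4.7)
The plan is to proceed by induction on $|V(G)|$, exploiting the classical fact (due to Dirac) that a chordal graph is either a clique or admits a clique-cutset, together with the clique-cutset decomposition method established earlier in the paper. The base case, where $G$ is a clique, is immediate: by clique-acyclicity, the super-orientation $D$ has at least one sink $v$, and $\{v\}$ is a kernel of $D$, computable in polynomial time by scanning arcs.

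For the inductive step, I would take a chordal graph $G$ that is not a clique and use a standard polynomial-time clique-cutset detection procedure to find a clique-cutset $K$, writing $V(G) \setminus K = V_1 \sqcup V_2$ with no edge between $V_1$ and $V_2$. Both induced subgraphs $G_1 = G[V_1 \cup K]$ and $G_2 = G[V_2 \cup K]$ are strictly smaller and chordal (since chordality is preserved under induced subgraphs), and the super-orientations $D_1, D_2$ induced by $D$ are clique-acyclic because every clique of $G_i$ is a clique of $G$. Applying the induction hypothesis to $D_1$ and $D_2$, I would obtain kernels in each part, and then invoke the clique-cutset decomposition lemma from the previous section to assemble a kernel of $D$.

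The main obstacle lies in the consistency of the assembly: since $K$ is a clique and each kernel is stable, $S_i \cap K$ contains at most one vertex, and two independent recursive calls on $D_1$ and $D_2$ may produce incompatible choices on $K$. This is precisely what the clique-cutset decomposition lemma is designed to handle, presumably by first computing a kernel $S_1$ of $D_1$ and then forcing the recursive call on $D_2$ to respect the choice $S_1 \cap K$, possibly after a local modification of $D_2$ near $K$ that ensures absorption and stability across the cut. Correctness of this glueing relies on the Boros--Gurvich theorem, which guarantees that at every recursive step the (modified) instance still admits a kernel. Since the clique-cutset decomposition tree of a chordal graph has polynomial size and each recursive step performs a polynomial amount of work, the overall algorithm runs in polynomial time.
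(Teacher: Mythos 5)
Your high-level plan coincides with the paper's: recurse on clique-cutsets of the chordal graph, take clique-acyclic super-orientations of cliques as the base case (a sink is a kernel), and glue the pieces with the clique-cutset machinery of Section~\ref{sec:Tools}. If Lemma~\ref{lem:cliqueCutset} and Proposition~\ref{prop:atomes} are used as black boxes, this closes the argument, and it is essentially the paper's two-line proof of Theorem~\ref{ChordalCASuper}.

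The caveat is that the glueing mechanism you sketch is not the one that actually works, and the glueing is where all the content lies. ``Compute $S_1$ on $D_1$, then force the recursive call on $D_2$ to respect $S_1\cap K$'' runs into a chicken-and-egg problem: if $S_1\cap K=\{v\}$, the piece $D_2$ may simply have no kernel containing $v$; and the natural ``local modification'' (delete $N(v)$ from $D_2$, solve, then add $v$) produces a stable set that need not absorb the vertices of $N(v)\cap V_2$ that $v$ itself does not absorb. Appealing to Boros--Gurvich only gives existence of \emph{some} kernel of the modified instance, not of one compatible with the choice made on the other side. The paper's Lemma~\ref{lem:cliqueCutset} circumvents this by never forcing a choice: it computes $|C|+1$ kernels $K_1,\dots,K_{|C|+1}$ of one piece with respect to a shrinking portion of the cutset ($K_i$ is a kernel of $D[B\cup(C\setminus\bigcup_{j<i}K_j)]$), uses a pigeonhole argument to ensure one of them misses $C$ entirely, makes a single recursive call on $D[B'\cup(C\cap\bigcup_j K_j)]$, and then matches whichever cutset vertex (if any) the returned kernel selects with the appropriate $K_\ell$. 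A second, more minor gap: recursing on both $D_1$ and $D_2$, which overlap on $K$, requires a separate argument that the recursion tree stays polynomial; the paper sidesteps this by choosing one piece to be an atom (no clique-cutset), solving it directly, and recursing only on the other, strictly smaller side, which yields $f(n)\le \mathrm{poly}(n)+f(n-1)$.
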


In both cases, we show that there exists an algorithm that, when given a graph of the mentioned classes, returns a kernel in polynomial time. While our result for chordal graphs is about super-orientations, we had to restrict ourselves to the case of orientations for claw-free graphs. The complexity status of kernel computation in super-orientations of claw-free graphs is left as an open problem. Note also that in the case of Theorem~\ref{ChordalCASuper} if only clique-acyclic orientations are considered, the computation of a kernel is easy: indeed a clique-acyclic orientation of a chordal graph is a digraph without directed cycle. 

We end this introduction with a few insights in the light of complexity theory.
For each of our two theorems, deciding whether the instance belongs to the expected class can be done in polynomial time: for orientations, clique-acyclicity is equivalent to the absence of directed cycle of length three; for super-orientations of chordal graphs, clique-acyclicity can be checked by enumerating maximal cliques, since there is a linear number of such cliques. In general, deciding whether a super-orientation of a perfect graph is clique-acyclic is a coNP-complete problem~\cite[Corollary 11]{AnHo14}. Therefore, a way to get a problem with polynomially recognizable instances is to restrict to orientations of perfect graphs. However, the problem of computing a kernel in an orientation of a perfect graph is not known to belong to any of the classical subclasses of TFNP, such as PPAD or PLS, except if the graphs have bounded cliques, as done by Kintali et al.~\cite{kintali2009reducibility} for a related question (complexity of computing a ``fractional kernel''; we also refer to that paper for definitions and discussions on these complexity classes).


\begin{remark}
A preliminary version of this paper with Theorem~\ref{ChordalCASuper} and Theorem~\ref{thm:DE} (in Section~\ref{sec:inter}) has been presented at CTW 2017.
\end{remark}

\section{Graph operations and kernels}
\label{sec:Tools}

This section gathers preliminary results that will be useful for the proofs of Theorems~\ref{clawFree} and~\ref{ChordalCASuper}. The following notation will be used throughout the paper. In a graph $G$, the set of all neighbors of a vertex $v$ is denoted by $N_G(v)$. In a digraph $D$, the set of all out-neighbors (resp. in-neighbors) of a vertex $v$ is denoted by $N_D^+(v)$ (resp. $N_D^-(v)$).

\subsection{Clique-cutset}
A \emph{clique-cutset} of a digraph $D = (V, A)$ is a subset $C \subseteq V$ such that $C$ induces a clique in $D$ and $D[V \backslash C]$ is disconnected.
For each connected component $B$ of $D[V \backslash C]$, the directed graph induced by $B \cup C$ is a \emph{piece} of $D$ with respect to $C$. 
The main result of this section -- Proposition~\ref{prop:atomes} -- claims, roughly speaking, that if we are able to compute efficiently kernels in pieces, we are then able to combine them efficiently to get a kernel in the whole graph. It relies on the following lemma.

\begin{lemma}\label{lem:cliqueCutset}
Let $C$ be a clique-cutset of a digraph $D=(V,A)$ and $B,B'$ a bipartition of $V\setminus C$ such that $D[B\cup C]$ is a piece of $D$ with respect to $C$. Suppose that there exist subsets $K_i$ of vertices such that $K_i$ is a kernel of $D\left[B\cup \left(C\setminus\bigcup_{j=1}^{i-1}K_j\right)\right]$ for every $i\in\{1,\ldots,|C|+1\}$
and such that $D\left[B'\cup \left(C\cap\bigcup_{j=1}^{|C|}K_j\right)\right]$ has a kernel $K$. Then there exists $i\in\{1,\ldots,|C|+1\}$ such that $K\cup K_i$ is a kernel of $D$.
\end{lemma}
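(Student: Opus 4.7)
The plan is to leverage that $C$ is a clique: since each $K_i$ and $K$ is stable, one has $|K_i \cap C| \le 1$ and $|K \cap C| \le 1$. Setting $C_i := C \setminus \bigcup_{j=1}^{i} K_j$, the values $|C_0|, |C_1|, \ldots, |C_{|C|+1}|$ start at $|C|$ and decrease by $0$ or $1$ at each step, so by pigeonhole there exists $i^* \in \{1, \ldots, |C|+1\}$ with $K_{i^*} \cap C = \emptyset$, i.e., $K_{i^*} \subseteq B$. Write also $C^* := C \cap \bigcup_{j=1}^{|C|} K_j$, so that $K$ is a kernel of $D[B' \cup C^*]$. I would then split according to whether $K \cap C$ is empty.

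If $K \cap C = \emptyset$, I would set $i := i^*$. Since $K \subseteq B'$, $K_i \subseteq B$ and $C$ is a cutset, no edge of $D$ joins $K$ and $K_i$, so $K \cup K_i$ is stable. For absorption, vertices in $B \setminus K_i$ and $B' \setminus K$ are absorbed by $K_i$ and $K$ respectively via the kernel hypotheses; a vertex $v \in C$ is either in $C_{i-1}$ (and absorbed by $K_i$) or in $C \setminus C_{i-1} \subseteq C^*$ (and absorbed by $K$). If instead $K \cap C = \{c\}$, then $c \in C^*$ so $c = K_{i_0} \cap C$ for some $i_0 \in \{1, \ldots, |C|\}$, and I would use $i := i_0$. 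The identity $K \cap C = K_{i_0} \cap C = \{c\}$ together with stability of $K$ and $K_{i_0}$ and the cutset property gives stability of $K \cup K_{i_0}$; absorption of vertices in $B$ and $B'$ is handled as before, and for $v \in C \setminus \{c\}$, either $v \in C^*$ (absorbed by $K$) or $v \in C \setminus C^*$, in which case $v \ne c_j$ for every $j \in \{1, \ldots, |C|\}$, so $v \in C_{i_0 - 1} \setminus K_{i_0}$ and $K_{i_0}$ absorbs $v$.

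The step I expect to be the main obstacle is precisely the absorption of $v \in C \setminus C^*$ in the second case: such a $v$ is absent from the graph $D[B' \cup C^*]$ on which $K$ is defined, so $K$ offers no guarantee about $v$, and the argument works only because $v$ survives every peeling step and is therefore caught by $K_{i_0}$. Conversely, the $|C|+1$ kernels in the hypothesis are exactly what is needed to force, via pigeonhole, the existence of an index $i^*$ with $K_{i^*} \subseteq B$ used in the first case, whereas matching the unique vertex of $K \cap C$ to one of the $K_j$'s dictates the choice in the second.
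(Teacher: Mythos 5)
Your proposal is correct and follows essentially the same route as the paper's proof: the same pigeonhole argument (phrased via cardinalities of the sets $C_i$ rather than via the stabilization of the non-decreasing sequence $C\cap\bigcup_{j<i}K_j$) produces an index with $K_{i^*}\cap C=\varnothing$, and the same case split on $K\cap C$ — taking $i^*$ when it is empty and matching the unique vertex of $K\cap C$ to some $K_{i_0}$ otherwise — yields stability and absorption exactly as in the paper. The only cosmetic blemish is the undefined notation $c_j$ near the end, which clearly stands for the unique element of $K_j\cap C$ when it exists.
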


\begin{proof}
For notational simplicity, let us define $X_i:=C\cap\bigcup_{j=1}^{i-1}K_j$, so that $K_i$ is a kernel of $D[B \cup (C \backslash X_i)]$ for every $i\in\{1,\ldots,|C|+1\}$.

We first prove that there is an index $k\in\{1,\ldots,|C|+1\}$ such that $C\cap K_k=\varnothing$. The sequence $(X_i)_{i=1,\ldots,|C|+2}$ of subsets of $C$ is non-decreasing (for inclusion). There is thus an index $k\in\{1,\ldots,|C|+1\}$ such that $X_k=X_{k+1}$. Since $K_k$ is a kernel of $D[B\cup (C\setminus X_k)]$, we have $C\cap K_k\subseteq C\setminus X_k$. The equality $X_k=X_{k+1}$ implies that $C \cap K_k\subseteq X_k$. We have thus  $C\cap K_k=\varnothing$, as claimed.

Assume that $D[B'\cup X_{|C|+1}]$ has a kernel $K$. Suppose first that $C$ and $K$ have an empty intersection. The set $C$ being a clique-cutset and $C\cap K_k$ being empty, the set $K\cup K_k$ is stable. The vertices in $B\cup (C\setminus X_k)$ are absorbed by $K_k$ and those of $B'\cup X_k$ are absorbed by $K$ because $X_k \subseteq X_{|C|+1}$. Hence $K\cup K_k$ is a kernel of $D$.

Suppose then that $C$ and $K$ have a non-empty intersection. Denote by $v$ a vertex in $C\cap K$. By definition of $K$, we have $C\cap K\subseteq X_{|C|+1}$ and $v \in \bigcup_{j=1}^{|C|} K_j$. There is thus an index $\ell\in\{1,\ldots,|C|\}$ such that $v\in K_{\ell}$. Since $C$ is a clique-cutset, the set $K\cup K_{\ell}$ is stable in $D$.  The vertices in $B\cup (C\setminus X_{\ell})$ are absorbed by $K_{\ell}$ and those of $B'\cup X_{\ell}$ by $K$, we get that $K\cup K_{\ell}$ is a kernel of $D$.
\end{proof}

\begin{proposition}
\label{prop:atomes}
Consider two classes $\Drond$ and $\Drond_0$ of clique-acyclic super-orientations of perfect graphs, closed for taking induced subdigraphs, such that every digraph of $\Drond$ without a clique-cutset belongs to $\Drond_0$.
Suppose there exists an algorithm that, when given a digraph of $\Drond_0$, computes a kernel of this digraph in polynomial time. 

Then there exists an algorithm that, when given a digraph of $\Drond$, computes a kernel of this digraph in polynomial time.
\end{proposition}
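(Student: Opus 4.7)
The plan is to design a recursive algorithm and prove by strong induction on $|V(D)|$ that it outputs a kernel of $D$ in polynomial time. Let $\mathcal{A}_0$ denote the polynomial-time algorithm for $\Drond_0$ furnished by the hypothesis of the proposition.

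The base case is when $D$ has no clique-cutset: by hypothesis $D \in \Drond_0$, so we return $\mathcal{A}_0(D)$. In the recursive case, the first key step is to find, in polynomial time, a clique-cutset $C$ of $D$ such that one of the pieces, say $D[B\cup C]$, has no clique-cutset (and hence belongs to $\Drond_0$). Such a $C$ exists and can be found by running a standard polynomial-time clique-cutset decomposition of the underlying undirected graph of $D$ (for instance Tarjan's algorithm); it suffices to take $B$ to be the vertex set of a leaf of the resulting decomposition tree, so that the clique separating this leaf from the rest plays the role of $C$.

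Given such a $C$, we apply Lemma~\ref{lem:cliqueCutset} as follows. Iteratively, for $i = 1, \dots, |C|+1$, we compute $K_i$: with $X_i = C \cap \bigcup_{j<i} K_j$, the digraph $D[B \cup (C\setminus X_i)]$ is an induced subdigraph of $D[B \cup C] \in \Drond_0$ and so lies in $\Drond_0$ (which is closed under induced subdigraphs); hence $\mathcal{A}_0$ returns a kernel $K_i$ in polynomial time. Existence of such kernels is guaranteed by the Boros--Gurvich theorem. We then compute a kernel $K$ of $D[B' \cup X_{|C|+1}]$ by a single recursive call, noting that this digraph lies in $\Drond$ as an induced subdigraph of $D$ and has strictly fewer vertices than $D$ since $B \neq \varnothing$. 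Finally we select the index $i$ prescribed by the two cases in the proof of Lemma~\ref{lem:cliqueCutset} and return $K \cup K_i$.

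Correctness follows directly from Lemma~\ref{lem:cliqueCutset}. For the running time, the recurrence $T(n) \le T(n-1) + (|C|+1)\,p(n) + q(n)$, where $p$ bounds the cost of $\mathcal{A}_0$ and $q$ the cost of clique-cutset decomposition and combining, unrolls to a polynomial since $|C| \le n$. The main obstacle is precisely the insistence that $D[B \cup C]$ already lie in $\Drond_0$: with an arbitrary clique-cutset both pieces could harbour further clique-cutsets, and the $|C|+1$ sub-calls on the $B$-side would themselves branch into polynomially many sub-calls per level, giving no obvious polynomial bound on the total number of recursive calls. Choosing $C$ from the atom decomposition sidesteps this by ensuring that only one genuine recursive call is needed per level of the recursion.
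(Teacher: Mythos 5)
Your proposal is correct and follows essentially the same route as the paper: find a clique-cutset with one piece free of clique-cutsets (via Tarjan's decomposition), compute the $|C|+1$ kernels on that piece with $\mathcal{A}_0$, recurse once on the other side, and combine via Lemma~\ref{lem:cliqueCutset}, with the same recurrence giving polynomial running time. Your closing remark about why the piece must already lie in $\Drond_0$ is exactly the point the paper's construction is designed to handle.
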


\begin{proof}
Let $\mathbf{A}_0$ be the polynomial-time algorithm whose existence is assumed.
Let $D$ be any digraph in $\Drond$. Let us describe the general algorithm $\mathbf{A}$ to apply to $D$ to compute a kernel.

If $D$ has no clique-cutset, apply $\mathbf{A}_0$; otherwise compute a clique-cutset $C$ of $D$ and a piece $D[B \cup C]$ of $D$ with respect to $C$ such that $D[B \cup C]$ contains no clique-cutset. If $D$ has at least one clique-cutset, then such a piece exists, and as shown by Tarjan~\cite{Tarjan}, it can be detected in polynomial time.

Compute $K_1,\ldots,K_{|C|+1}$ as in the statement of Lemma~\ref{lem:cliqueCutset}. By assumption, they are computable in polynomial time using the algorithm $\mathbf{A}_0$ since $D[B \cup C]$ is in $\Drond_0$. Call recursively algorithm $\mathbf{A}$ on $D\left[B'\cup \left(C\cap\bigcup_{j=1}^{|C|}K_j\right)\right]$, which is an element of $\Drond$. Given its result $K$, a kernel $K  \cup K_i$ of $D$ is found according to Lemma~\ref{lem:cliqueCutset}.

Let us prove that $\mathbf{A}$ has polynomial complexity. Denote by $f(n)$ it complexity function for an input graph with $n$ vertices. In addition to the computation of $K$ (whose complexity is upper-bounded by $f(n-1)$), the algorithm has to perform the following tasks: 
 find the clique $C$ and the set $B$, perform $\mathbf{A}_0$, and find $i$ such that $K\cup K_i$ is the kernel of $D$. We can find positive values $\lambda$ and $\alpha$ with $\lambda\geq f(1)$ such that the complexity of each of these tasks is upper-bounded by $\lambda n^\alpha$.
 We have for $n\geq 2$
$$ f(n) \leq 2\lambda n^\alpha + (|C|+1) \lambda n^\alpha+ f(n-1)\leq \lambda n^{\alpha+2}+ f(n-1).$$ 
A direct induction shows that $f(n)\leq\lambda n^{\alpha+3}$.
\end{proof}

\begin{remark}
A special version of Lemma~\ref{lem:cliqueCutset} is a neat theorem by Jacob~\cite{Jacob}, used by Maffray for proving his result about kernels in $i$-triangulated graphs \cite{MaffrayITriang}. A digraph $D$ is \emph{kernel-perfect} if every subdigraph of $D$ has a kernel. Jacob's theorem states that if every piece of digraph $D$ with respect to a given clique-cutset is kernel-perfect, then $D$ has a kernel.

Jacob's theorem provides an existence result, but it does not give any clue regarding complexity: his proof can be adapted into an algorithm to compute a kernel of a digraph, but we were not able to make it polynomial, even on simple instances such as clique-acyclic orientations of interval graphs. 
Lemma~\ref{lem:cliqueCutset} relies on a different method to combine kernels of the pieces and its statement makes it amenable to algorithmic approaches. We may also note that our proof is much shorter than Jacob's original proof.
\end{remark}

\subsection{Augmentation of an edge}

An edge $xy$ in an undirected graph is \emph{flat} if it is not contained in any triangle. A flat edge can be subject to an augmentation, which is an operation introduced by Maffray and Reed for their characterization of claw-free perfect graphs~\cite{MaffrayReed}. A formal definition is given as follows.

Let $xy$ be a flat edge in a graph $H=(W,F)$ and let $B=(X,Y;E_{XY})$ be a cobipartite graph, with $E_{XY}$ being non-empty. We remind the reader that a cobipartite graph is the complement of a bipartite graph. Its vertex-set is partitioned into the two cliques $X$ and $Y$, and $E_{XY}$ is the set of edges between $X$ and $Y$. We assume that $B$ is disjoint from $H$.

The {\em augmentation of the edge $xy$ in $H$} consists in building a new graph from $H$ and $B$ by removing from $H$ the edge $xy$ and the vertices $x$ and $y$, and by adding all possible edges between $X$ and $N_H(x)\setminus\{y\}$ and between $Y$ and $N_H(y)\setminus\{x\}$. 

Consider a clique-acyclic orientation $D=(V,A)$ of a graph $G$ obtained by an augmentation of a flat edge $xy$ in a graph $H$. Denote by $s_X$ and $s_Y$ the sinks of $D[X]$ and $D[Y]$ respectively. Assume w.l.o.g. that $(s_X,s_Y)\notin A$, i.e., if there is an arc between $s_X$ and $s_Y$ it is the arc $(s_Y,s_X)$. Let $U=Y\setminus N_G(s_X)$ and let $s_U$ be the sink of $D[U]$ if $U \not= \varnothing$. We define 
$$S_U=\left\{\begin{array}{ll} \varnothing & \mbox{if $U=\varnothing$} \\ \{s_U\} & \mbox{otherwise}\end{array}\right.\qquad\mbox{and}\qquad Z=V\setminus(X\cup Y)\cup\{s_X,s_Y\}\cup S_U.$$

The following two results play a crucial role in our proof of Theorem~\ref{clawFree}.

\begin{lemma}\label{obs:structureGZ}
The graph $G[Z]$ is isomorphic to $H$, or to $H\setminus\{xy\}$, or to $H$ plus an additional vertex with neighborhood $N_H(y)\cup\{y\}\setminus\{x\}$.
\end{lemma}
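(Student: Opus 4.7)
The plan is to exhibit an explicit bijection from $Z$ to the vertex set of one of the three target graphs, sending $s_X\mapsto x$, $s_Y\mapsto y$, each vertex of $V\setminus(X\cup Y)$ to itself, and (when applicable) $s_U$ to the newly adjoined vertex.

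First I would record the relevant neighborhoods. By definition of the augmentation, every vertex of $X$ is joined in $G$ to exactly $N_H(x)\setminus\{y\}$ outside $X\cup Y$, and every vertex of $Y$ is joined to exactly $N_H(y)\setminus\{x\}$ outside $X\cup Y$; in particular this holds for $s_X$, $s_Y$, and (when it exists) $s_U$. Since the augmentation only alters edges incident to $X\cup Y$, we have $G[V\setminus(X\cup Y)]=H[W\setminus\{x,y\}]$. The flatness of $xy$ in $H$ forces $N_H(x)\setminus\{y\}$ and $N_H(y)\setminus\{x\}$ to be disjoint, so the local pictures of $G[Z]$ around $s_X$ and $s_Y$ will match those of $x$ and $y$ in $H$, up to whether the single edge $s_Xs_Y$ is present.

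Next I would split into three cases according to whether $s_Xs_Y$ is an edge of $G$ and whether $U$ is empty. If $s_Xs_Y$ is an edge and $U=\varnothing$, then $Z=(V\setminus(X\cup Y))\cup\{s_X,s_Y\}$ and the identification yields $G[Z]\cong H$. If $s_Xs_Y$ is not an edge, then $s_Y\in Y\setminus N_G(s_X)=U$; since $s_Y$ absorbs every vertex of $Y\supseteq U$ in $D$, it is the sink of $D[U]$, so $s_U=s_Y$ and $S_U=\{s_Y\}$. Thus $Z$ is again $(V\setminus(X\cup Y))\cup\{s_X,s_Y\}$, but now with no edge between $s_X$ and $s_Y$, giving $G[Z]\cong H\setminus\{xy\}$. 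Finally, if $s_Xs_Y$ is an edge and $U\neq\varnothing$, then $s_U\in Y$ is distinct from $s_Y$ (because $s_Y\in N_G(s_X)$ forces $s_Y\notin U$); it lies in the clique $Y$ so it is adjacent to $s_Y$, it is non-adjacent to $s_X$ by definition of $U$, and it has external neighborhood $N_H(y)\setminus\{x\}$. Under the identification, $s_U$ then corresponds to an extra vertex with neighborhood $(N_H(y)\setminus\{x\})\cup\{y\}=N_H(y)\cup\{y\}\setminus\{x\}$, as required.

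The delicate point will be the $S_U$ bookkeeping: I must verify that in the second case $s_U$ merges with $s_Y$ (so nothing is adjoined), while in the third $s_U$ is a genuinely new vertex whose neighborhood in $Z$ is exactly the one claimed. Both reductions hinge on the elementary fact that a sink of $D[Y]$ that happens to lie in $U$ is automatically the sink of $D[U]$, together with $Y$ being a clique in $G$.
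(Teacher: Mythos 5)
Your proof is correct and follows essentially the same route as the paper's: the same identification $s_X\mapsto x$, $s_Y\mapsto y$, and the same three-way case split on whether $s_X s_Y$ is an edge and whether $U$ is empty, including the key observation that when $s_X s_Y$ is a non-edge the sink $s_Y$ of $D[Y]$ lies in $U$ and hence equals $s_U$. You simply spell out the neighborhood bookkeeping in more detail than the paper does.
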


\begin{lemma}\label{lem:augmentation}
Any kernel of $D[Z]$ is a kernel of $D$.
\end{lemma}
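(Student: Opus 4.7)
The plan is to verify separately the two defining conditions of a kernel for $K$ in $D$: stability and absorption. Stability follows at once, since $K \subseteq Z$ and any arc of $D$ between two vertices of $K$ would also be an arc of $D[Z]$, contradicting that $K$ is a kernel of $D[Z]$. For absorption, the vertices of $Z \setminus K$ are already absorbed by hypothesis, so the work consists in handling $v \in V \setminus Z$, that is $v \in (X \setminus \{s_X\}) \cup (Y \setminus (\{s_Y\} \cup S_U))$.

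The structural observation I would exploit is the following easy consequence of the augmentation: any $w \in W \setminus \{x, y\}$ adjacent in $G$ to one vertex of $X$ is in fact adjacent to every vertex of $X$, and similarly for $Y$. This holds because the augmentation makes $X$ completely joined to $N_H(x) \setminus \{y\}$. The point is that it promotes any single arc leaving one of the sinks $s_X$, $s_Y$, $s_U$ to a triangle in $G$, from which clique-acyclicity immediately forces the arc direction we need.

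Concretely, for $v \in X \setminus \{s_X\}$ one has $(v, s_X) \in A$ since $s_X$ is the sink of $D[X]$. If $s_X \in K$ we are done; otherwise $s_X$ is absorbed in $D[Z]$ by some $u \in K$, with $(s_X, u) \in A$. I would then rule out $u = s_Y$ (forbidden by the assumption $(s_X, s_Y) \notin A$) and $u = s_U$ (since $s_U \in Y \setminus N_G(s_X)$), leaving $u \in W \setminus \{x, y\}$. By the structural observation, $u$ is adjacent to all of $X$, so $\{v, s_X, u\}$ is a triangle carrying the arcs $(v, s_X)$ and $(s_X, u)$, and clique-acyclicity forces $(v, u) \in A$. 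The case $v \in Y \setminus (\{s_Y\} \cup S_U)$ is handled symmetrically by splitting on whether $v \in U$ (route through $s_U$) or $v \in N_G(s_X)$ (route through $s_Y$); in each branch the absorber in $K$ of the intermediate sink is either some vertex of $W \setminus \{x, y\}$, giving the same triangle argument, or equals $s_Y$ or $s_X$, in which case absorption of $v$ follows directly, in the latter case via a clique-acyclicity argument on the triangle $\{v, s_Y, s_X\}$ (which is genuinely a triangle because $(s_Y, s_X) \in A$ and $v \in N_G(s_X) \cap (Y \setminus \{s_Y\})$).

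The main obstacle is precisely this bookkeeping: in every subcase one must be able to locate the absorber $u$ of the intermediate sink either inside $W \setminus \{x, y\}$, where the triangle argument applies, or at another sink that directly absorbs $v$. The definition $Z = V \setminus (X \cup Y) \cup \{s_X, s_Y\} \cup S_U$, together with the hypotheses $(s_X, s_Y) \notin A$ and $U = Y \setminus N_G(s_X)$, is tailored exactly to enforce this dichotomy uniformly across all subcases.
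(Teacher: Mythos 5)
Your proposal is correct and follows essentially the same route as the paper: absorb $v$ via the appropriate sink ($s_X$, $s_U$, or $s_Y$), then use the fact that every vertex of $W\setminus\{x,y\}$ adjacent to one vertex of $X$ (resp.\ $Y$) is adjacent to all of $X$ (resp.\ $Y$) to form a triangle and transfer absorption by clique-acyclicity. The paper merely packages the triangle step as a single claim (if some $z\in N_D^+(v)\cap Z$ satisfies $N_{D[Z]}^+(z)\subseteq N_G(v)$ then $K$ absorbs $v$) and instantiates it in the three cases, whereas you unroll the same argument into explicit subcases for the possible absorbers.
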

\begin{proof}[Proof of Lemma~\ref{obs:structureGZ}]
In all cases, vertex $s_X$ (resp. $s_Y$) of $G[Z]$ is identified with vertex $x$ (resp. $y$) in $H$.
If $(s_Y, s_X) \notin A$, then $s_X$ and $s_Y$ are not neighbors and $s_U =s_Y$. Then $G[Z]$ is isomorphic to $H \backslash \{xy\}$.
If $(s_Y,s_X) \in A$ and $U = \varnothing$, then $S_U = \varnothing$ and $G[Z]$ is isomorphic to $H$.
If $(s_Y,s_X) \in A$ and $U \not= \varnothing$, then $G[Z]$ is isomorphic to $H$ plus the additional vertex $s_U$. Furthermore, $s_U \in U$ implies that the neighborhood of this additional vertex is exactly $N_H(y) \cup \{y\} \backslash \{x\}$.
\end{proof}

\begin{proof}[Proof of Lemma~\ref{lem:augmentation}]
Let $K$ be a kernel in $D[Z]$. Since $D[Z]$ is an induced subdigraph of $D$, it is clear that $K$ remains stable in $D$. Furthermore, it is sufficient to prove that $K$ absorbs $V \setminus Z$ to get that $K$ is a kernel of $D$.

The conclusion will be almost straightforward with the following claim: {\em Let $v\in V$. If there is a vertex $z\in N_D^+(v)\cap Z$ such that $N_{D[Z]}^+(z)\subseteq N_G(v)$, then $v$ is absorbed by $K$.} This claim is true because if $z$ is not in $K$, then $z$ is absorbed by a vertex $k\in N_{D[Z]}^+(z) \cap K \subseteq N_G(v) \cap K$, and clique-acyclicity implies that $k$ absorbs $v$ as well.

Consider now a vertex $v\in V\setminus Z$. To finish the proof, we have to show that there always exists a vertex in $K$ absorbing $v$.

Assume first $v \in X \setminus \{s_X\}$. By definition of $s_X$, we have $s_X\in N_D^+(v)\cap Z$. By assumption $(s_X, s_Y) \notin A$ and if the vertex $s_U$ exists, it is not a neighbor of $s_X$. Hence $N_{D[Z]}^+(s_X)\subseteq N_H(x) \setminus \{y\}$. By definition of augmentation, we have thus $N_{D[Z]}^+(s_X)\subseteq N_G(v)$ and the claim allows to conclude with $z=s_X$.

Assume then $v \in U \setminus \{s_U\}$. We have $s_U\in N_D^+(v)\cap Z$. The vertex $s_U$ is not a neighbor of $s_X$ by definition of $U$. Hence $N_{D[Z]}^+(s_U)\subseteq N_H(y) \backslash \{x\} \cup \{s_Y\}$. By definition of the augmentation, we have thus $N_{D[Z]}^+(s_U)\subseteq N_G(v)$ and the claim allows to conclude with $z=s_U$.

Assume finally $v \in Y \setminus\left(U\cup\{s_Y\}\right)$. We have $s_Y\in N_D^+(v)\cap Z$ and $N_{D[Z]}^+(s_Y)\subseteq N_H(y) \backslash\{x\} \cup \{s_X\}$. By definition of the augmentation, we have thus $N_{D[Z]}^+(s_Y)\subseteq N_G(v)$ and the claim allows to conclude with $z=s_Y$.
\end{proof}

\section{Proof of Theorem~\ref{clawFree}}
\label{sec:clawFree}

An \emph{augmentation of a graph} is defined by taking a matching of flat edges in the graph and applying sequentially augmentations of the edges in the matching, as presented in Section~\ref{sec:Tools}. The final output does not depend on the order in which augmentations of edges are performed.

\begin{lemma}[Chv\'atal and Sbihi~\cite{chvatal1988recognizing}, Maffray and Reed~\cite{MaffrayReed}]\label{piece_claw-free-perfect}
A claw-free perfect graph $G$ without a clique-cutset has a stability number at most $9$ or it is an augmentation of the line-graph of a bipartite multigraph. Moreover, this property is shared by any induced subgraph of $G$.
\end{lemma}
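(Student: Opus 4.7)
The plan is to derive the stated dichotomy by combining two deep structural theorems about claw-free perfect graphs. First I would invoke the Chv\'atal--Sbihi decomposition theorem \cite{chvatal1988recognizing}: every claw-free perfect graph either admits a clique-cutset, belongs to a restricted family called \emph{peculiar} graphs, or is \emph{elementary} (an elementary graph being one whose edges can be two-colored so that every induced $P_3$ sees both colors). Since $G$ has no clique-cutset by hypothesis, it must be peculiar or elementary.

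Next I would handle each case. Peculiar graphs in Chv\'atal--Sbihi's sense have a very rigid structure and in particular a small bounded stability number, certainly at most $9$; this gives the first alternative of the lemma. For the elementary case, I would invoke the Maffray--Reed structure theorem \cite{MaffrayReed}, which asserts that a graph is elementary if and only if it is an augmentation of the line-graph of some bipartite multigraph; this gives the second alternative and completes the dichotomy for $G$ itself.

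For the "moreover" clause, I would argue that the hypotheses are all hereditary: any induced subgraph $G'$ of $G$ is again claw-free and perfect. If $G'$ happens to have no clique-cutset, then applying the same combined argument to $G'$ yields the same dichotomy for $G'$ directly, and otherwise the statement applies trivially once $G'$ is decomposed along its clique-cutsets into pieces that are themselves claw-free and perfect.

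The main obstacle is not the logical glue, which is essentially bookkeeping once the two structure theorems are granted, but the structure theorems themselves. The Chv\'atal--Sbihi decomposition and the Maffray--Reed characterization of elementary graphs as augmentations of line-graphs of bipartite multigraphs are both long and technical, resting on delicate case analyses within the theory of claw-free perfect graphs. Reproducing either is far beyond the scope of the present paper, which is why both results are simply cited here and used as a black box.
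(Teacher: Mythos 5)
Your treatment of the main dichotomy matches the paper's: the lemma is used as a black box resting on the Chv\'atal--Sbihi decomposition (clique-cutset / peculiar / elementary) and the Maffray--Reed characterization of elementary graphs as augmentations of line-graphs of bipartite multigraphs, and the paper likewise simply cites these two results rather than reproving them. So the first sentence of the lemma is fine as you argue it.

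The gap is in the ``moreover'' clause, and it is a real one because that clause is exactly what the proof of Theorem~\ref{clawFree} needs (the class $\Drond_0$ there consists of \emph{induced subdigraphs} of clique-cutset-free digraphs, and such subdigraphs may very well have clique-cutsets). Your argument for an induced subgraph $G'$ of $G$ splits on whether $G'$ has a clique-cutset, but absence of a clique-cutset is not hereditary, and in the case where $G'$ does have one you only conclude something about the \emph{pieces} of $G'$, not about $G'$ itself. That does not establish the claimed property for $G'$: a graph whose pieces each have small stability number or are augmentations of line-graphs of bipartite multigraphs need not itself satisfy either alternative, so your fallback is circular with respect to what must be proved. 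The correct route, which the paper points out immediately after the lemma, is to show that \emph{both alternatives are themselves hereditary}: stability number at most $9$ is trivially preserved under taking induced subgraphs, and the class of augmentations of line-graphs of bipartite multigraphs is closed under taking induced subgraphs --- a fact the paper explicitly flags as non-immediate from the definition of augmentation given there (deleting a vertex of $X$ or $Y$ can destroy the cobipartite gadget or empty $E_{XY}$, and one must check the result is still an augmentation of the line-graph of some bipartite multigraph). Your proposal never addresses this closure property, which is the only genuinely new ingredient in the second sentence of the lemma.
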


The second part of the statement is a consequence of the fact, non-immediate from the definition given here, that the class of augmentations of line-graphs of bipartite multigraphs is closed for taking induced subgraphs.

A preliminary result on line-graphs of bipartite multigraphs needed in our proof is the following.

\begin{lemma}\label{lem:operationsLGBipMulti}
The class of line-graphs of bipartite multigraphs is closed for the following two operations: deletion of a flat edge; addition of a new vertex adjacent to a maximal clique. 
\end{lemma}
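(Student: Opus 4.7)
The plan is to realize each of the two operations on $L(H)$ by an explicit transformation of $H$ that keeps it a bipartite multigraph. I would first recall the structural dictionary between $H$ and $L(H)$ when $H$ is bipartite: vertices of $L(H)$ are edges of $H$, and two such vertices are adjacent iff the corresponding edges share an endpoint. Since $H$ contains no triangles, any triangle in $L(H)$ must consist of three edges of $H$ meeting at a common vertex. Consequently every maximal clique of $L(H)$ has the form $S_u=\{e\in E(H):u\in e\}$ for some vertex $u$ of $H$; that is, it is the star around some vertex of $H$ (and for an isolated edge $uv$ we may pick either endpoint).

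For the deletion of a flat edge $e_1e_2$ of $L(H)$, I would first observe that flatness forces a rigid local structure on $H$. Since the three edges of any triangle in $L(H)$ share a common vertex, $e_1e_2$ is flat iff no third edge of $H$ is incident to a vertex shared by $e_1$ and $e_2$. A short case analysis distinguishes: (a) $e_1$ and $e_2$ share exactly one vertex $u$, in which case flatness forces $\deg_H(u)=2$; (b) $e_1$ and $e_2$ are parallel edges with endpoints $u,v$, in which case flatness forces $\deg_H(u)=\deg_H(v)=2$. In case (a) I would construct $H'$ by splitting $u$ into two copies $u_1,u_2$, reassigning $e_1$ to $u_1$ and $e_2$ to $u_2$; in case (b) I would similarly split both $u$ and $v$. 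Each split places the new copies on the same side of the bipartition as the original vertex, so $H'$ remains bipartite. In $L(H')$ the vertices $e_1$ and $e_2$ no longer share any endpoint, hence the edge $e_1e_2$ has disappeared; all other adjacencies are preserved because every other edge of $H$ is affected by the split at only one of its endpoints, and that endpoint still carries the same incidences.

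For the addition of a new vertex $w$ adjacent to a maximal clique $K$, I would invoke the characterization from the first paragraph to write $K=S_u$ and then construct $H'$ by adding one new vertex $u'$, placed on the side of the bipartition opposite to $u$, together with the single new edge $uu'$. In $L(H')$ this creates one new vertex, corresponding to $uu'$, whose neighbors are precisely the edges of $H'$ sharing an endpoint with $uu'$. Since $u'$ is brand new it contributes no adjacencies, so this new vertex is adjacent to exactly the edges of $H$ containing $u$, i.e., to $S_u=K$, as required. Bipartiteness of $H'$ is immediate. The only mild obstacle is the case analysis for flat edges, in particular the handling of parallel edges in the multigraph, but both cases are dispatched by the same vertex-splitting idea.
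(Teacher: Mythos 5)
Your proof is correct and follows essentially the same route as the paper: both operations are realized on the underlying bipartite multigraph, splitting the degree-two vertex shared by the two edges to delete a flat edge, and attaching a pendant edge at the center of the star corresponding to the maximal clique to add a new vertex. The only difference is cosmetic: you treat the parallel-edge case explicitly, whereas the paper sidesteps it by assuming without loss of generality that the line-graph is connected with at least three vertices, which forces the two edges of a flat pair to meet in a single degree-two vertex.
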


\begin{proof}
Let $L$ be the line-graph of a bipartite multigraph $B$. Without loss of generality, we can assume that $L$ is connected and has at least three vertices.
Let $xy$ be a flat edge in $L$ and let $L'$ be the graph obtained from $L$ by deleting the flat edge $xy$. Because $xy$ is flat and because of the assumption on $L$, there exists three distinct vertices $b_1, b_2, b_3$ in $B$ with $x=b_1b_2$ and $y=b_2b_3$ and such that  $b_2$ has no other neighbor than $b_1$ and $b_3$ in $B$. Consider $B'$ obtained from $B$ by replacing vertex $b_2$ by two copies, one adjacent to $b_1$ through edge $x$, the other one adjacent to $b_3$ through edge $y$. Then $L'$ is the line-graph of the bipartite multigraph $B'$.

Consider now the second operation. Let $L''$ be the graph obtained from $L$ by adding a new vertex $s$ to a maximal clique. A maximal clique of $L$ corresponds in $B$ to a set of edges all adjacent to a given vertex $b_1$ since $B$ is bipartite. Consider $B''$ obtained from $B$ by adding a new vertex $b_2$ adjacent to $b_1$ by the new edge $s$. Then $L''$ is the line-graph of the bipartite multigraph $B''$.
\end{proof}

\begin{proposition}\label{prop:elem}
Let $G$ be an augmentation of the line-graph of a bipartite multigraph. A kernel of a clique-acyclic orientation of $G$ can be computed in polynomial time.
\end{proposition}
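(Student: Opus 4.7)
The plan is to argue by induction on the number $k$ of augmentations needed to produce $G$ from a line-graph $L=L(B)$ of a bipartite multigraph $B$. For the base case $k=0$, the graph $G$ equals $L$: each maximal clique of $L$ corresponds to a vertex of $B$ together with its incident edges, so a clique-acyclic orientation $D$ of $L$ induces a linear order on the edges incident to each vertex of $B$, and a kernel of $D$ is exactly a stable matching of $B$ with respect to those orders. Stable matchings are computable in polynomial time by the Gale--Shapley algorithm, straightforwardly extended to handle parallel edges.

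For the inductive step, I would pick a single augmented flat edge, which gives a distinguished cobipartite subgraph $(X,Y;E_{XY})$ inside $G$. Since augmentations of a matching commute, $G$ can be viewed as the augmentation of a single edge $xy$ in a graph $H$ obtained from $L$ by performing only the other $k-1$ augmentations; the edge $xy$ remains flat in $H$ because the matching property ensures that none of the other augmentations touches $N_L(x)\cup N_L(y)$. I would then invoke Lemma~\ref{lem:augmentation}, which reduces the task to computing a kernel of $D[Z]$ for the set $Z$ built from the sinks $s_X,s_Y,s_U$ as in Section~\ref{sec:Tools}. By Lemma~\ref{obs:structureGZ}, $G[Z]$ is isomorphic to $H$, to $H\setminus\{xy\}$, or to $H$ plus an additional vertex $v$ with neighborhood $N_H(y)\cup\{y\}\setminus\{x\}$.

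The core of the argument is then to check that each of these three graphs is itself an augmentation of a line-graph of a bipartite multigraph at $k-1$ flat edges, so that the inductive hypothesis applies. The first case is immediate. In the second case, $H\setminus\{xy\}$ is the augmentation of $L\setminus\{xy\}$ at the other $k-1$ flat edges, and $L\setminus\{xy\}$ is a line-graph of a bipartite multigraph by the first part of Lemma~\ref{lem:operationsLGBipMulti}. In the third case, the matching property yields $N_H(y)=N_L(y)$, and the flatness of $xy$ in $L$ implies that $N_L(y)\cup\{y\}\setminus\{x\}$ is exactly the maximal clique of $L$ associated with the non-shared endpoint of the edge of $B$ represented by $y$; the second part of Lemma~\ref{lem:operationsLGBipMulti} then exhibits $L+v$ as a line-graph of a bipartite multigraph, and $H+v$ is its augmentation at the same $k-1$ flat edges. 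Applying the inductive hypothesis to $D[Z]$ and combining with Lemma~\ref{lem:augmentation} produces a kernel of $D$ in polynomial time.

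The main obstacle I expect lies in the third case: one needs the ``new vertex'' $s_U$ to be adjacent to a bona fide maximal clique of the underlying line-graph $L$, which relies on the precise shape of flat edges in line-graphs of bipartite multigraphs (flatness of $xy$ in $L$ forces a degree-two vertex of $B$ at the shared endpoint of $x$ and $y$). A secondary concern is the computation of an augmentation decomposition of the input graph $G$; this is provided in polynomial time by the structural characterization of Chv\'atal--Sbihi and Maffray--Reed underlying Lemma~\ref{piece_claw-free-perfect}.
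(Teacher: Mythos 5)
Your proposal follows essentially the same route as the paper: strip each augmented cobipartite piece down to its sinks via Lemma~\ref{lem:augmentation}, identify the residual graph via Lemma~\ref{obs:structureGZ}, use Lemma~\ref{lem:operationsLGBipMulti} to stay inside (augmentations of) line-graphs of bipartite multigraphs, and finish with Gale--Shapley. The paper performs all $h$ reductions in one batch and identifies $G[Z_h]$ with a single graph $L^*$ built directly from $L$; your induction peels off one augmentation at a time, which is the same argument unrolled differently. The overall structure is sound and the final conclusion is correct.

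However, two of your intermediate justifications are false as stated and need repair. First, the matching property does \emph{not} ensure that the other augmentations avoid $N_L(x)\cup N_L(y)$: it only says the augmented edges are pairwise vertex-disjoint, and some $x_j$ may well be a neighbor of $x$ in $L$, in which case augmenting $x_jy_j$ replaces $x_j$ by the whole clique $X_j$ inside $N_H(x)$. The edge $xy$ does remain flat in $H$, but for a different reason: a common neighbor of $x$ and $y$ in $H$ would either be an original vertex of $L$ (contradicting flatness of $xy$ in $L$) or lie in some $X_j$ or $Y_j$, which would force $x_j$ or $y_j$ to be a common neighbor of $x$ and $y$ in $L$ -- again a contradiction. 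Second, and for the same reason, the claim $N_H(y)=N_L(y)$ is false in general (take $L$ a path $e_1e_2e_3e_4e_5$ and the matching $\{e_1e_2,\,e_3e_4\}$: after augmenting $e_3e_4$, the neighborhood of $e_2$ contains all of $X_2$ instead of $e_3$). The fix in your third case is to attach the new vertex $v$ to the maximal clique $N_L(y)\cup\{y\}\setminus\{x\}$ of $L$ -- which is what Lemma~\ref{lem:operationsLGBipMulti} requires -- and then verify that performing the remaining $k-1$ augmentations on $L+v$ carries $N(v)$ to exactly $N_H(y)\cup\{y\}\setminus\{x\}$ (using that at most one of $x_j,y_j$ can lie in $N_L(y)$, by flatness of $x_jy_j$). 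This is precisely how the paper argues, by adding the new vertices to $L$ itself with neighborhood $N_L(y_i)\cup\{y_i\}\setminus\{x_i\}$ before comparing with $G[Z_h]$. With these two repairs your induction goes through.
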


\begin{proof}
Let $G=(V,E)$ be an augmentation of the line-graph $L$ of a bipartite multigraph and $D$ a clique-acyclic orientation of $G$.
Let $x_1y_1,\ldots,x_hy_h$ denote the  matching of flat edges of $L$ that have been augmented and let $(X_i, Y_i;E_{X_iY_i})$ be the associated cobipartite graphs used in augmentations. For every $X_i,Y_i$, consider the vertices $s_{X_i}, s_{Y_i}$ and the set $S_{U_i}$ introduced in Section~\ref{sec:Tools}. For $i \in \{0,\ldots,h\}$ define the set
$$
Z_i = \left(V \backslash \bigcup_{j=1}^i (X_j \cup Y_j)\right) \cup \bigcup_{j=1}^i (\{ \sXj, \sYj\} \cup \SUj).
$$
According to Lemma~\ref{lem:augmentation}, a kernel of $D[Z_i]$ is a kernel of $D[Z_{i-1}]$ for any $i\in\{1,\ldots,h\}$. Hence a kernel of $D[Z_h]$ is a kernel of $D[Z_0] = D$. We are going to prove that $G[Z_h]$ is the line-graph of a bipartite multigraph. It will be then clear that a kernel of $D$ can be computed in polynomial time: Maffray and Reed~\cite{MaffrayReed} showed how to retrieve $L$ and the augmentations from $G$ in polynomial time, from which we can compute the set $Z_h$; the Gale-Shapley algorithm can then be used to compute a kernel of $D[Z_h]$ since $D[Z_h]$ is a clique-acyclic orientation of the line-graph of a bipartite multigraph; this kernel is then a kernel of $D$.

Consider a graph $L^*$ defined from the line-graph $L$ by applying the following operations for every $i \in \{1, \dots, h\}$: if $s_{X_i}$ and $s_{Y_i}$ are not neighbors in $D$, then delete the edge $x_i y_i$; if $s_{U_i}$ exists and it is distinct from $s_{Y_i}$, then add an additional vertex with neighborhood $N_L(y_i) \cup \{y_i\} \setminus \{x_i\}$. By repeated application of Lemma~\ref{lem:augmentation}, it comes that $G[Z_h]$ is isomorphic to the graph $L^*$. Also by repeated application of Lemma~\ref{lem:operationsLGBipMulti}, since $N_L(y_i) \cup \{y_i\} \setminus \{x_i\}$ induces a maximal clique of $L$ for every $i \in \{1, \dots, h\}$, the graph $L^*$ is the line-graph of a bipartite multigraph. Hence so is $G[Z_h]$.
\end{proof}

\begin{proof}[Proof of Theorem~\ref{clawFree}]
Let $\Drond$ be the class of clique-acyclic orientations of claw-free perfect graphs, and let $\Drond_0$ be the class of induced subdigraphs of digraphs in $\Drond$ without a clique-cutset.  According to Lemma~\ref{piece_claw-free-perfect}, any digraph $D$ in $\Drond_0$ has a stability number at most $9$ or it is an augmentation of the line-graph of a bipartite multigraph. Computing a kernel of $D$ in polynomial time can then be done as follows: test all possible subsets of at most $9$ vertices (brute-force enumeration); if it is not successful, then it means that $D$ is an augmentation of the line-graph of a bipartite graph and computing a kernel can be done in polynomial time according to Proposition~\ref{prop:elem}. Proposition~\ref{prop:atomes} implies thus that computing a kernel can be done over the whole class $\Drond$ in polynomial time.
\end{proof}

\section{Chordal, circular-arc, and other intersection graphs}
\label{sec:inter}


 \subsection{Perfect clique-acyclic case}
 \label{subsec:interParfait}

\begin{proof}[Proof of Theorem~\ref{ChordalCASuper}]
Chordal graphs without clique-cutset are cliques and any sink of a clique-acyclic super-orientation of a clique is a kernel of that clique. Proposition~\ref{prop:atomes} is applicable with $\Drond$ defined as the class of all clique-acyclic super-orientations of chordal graphs and $\Drond_0$ defined as the class of clique-acyclic super-orientations of cliques.
\end{proof}

Another class where a similar result holds is the class of \emph{Directed Edge graphs}, or \emph{DE graphs}. DE graphs are intersection graphs of directed paths (seen as arc sets) in a directed tree. They have been introduced and studied by Monma and Wei \cite{MonmaWei}. They proved that DE graphs are perfect and thus subject to the Boros-Gurvich theorem. For orientations, polynomiality of kernel computation has been established by Durand de Gevigney et al. \cite{Meunier}. We settle polynomiality in the general case of super-orientations.

\begin{theorem}\label{thm:DE}
Kernel computation is polynomial in clique-acyclic super-orien\-ta\-tions of DE graphs.
\end{theorem}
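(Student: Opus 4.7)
The plan is to apply Proposition~\ref{prop:atomes} with $\Drond$ the class of clique-acyclic super-orientations of DE graphs and $\Drond_0$ the subclass of those whose underlying graph admits no clique-cutset. Since the class of DE graphs is closed under induced subgraphs (a result of Monma and Wei), both $\Drond$ and $\Drond_0$ are closed under induced subdigraphs, as required by the proposition. The task then reduces to producing a polynomial-time kernel algorithm for the atoms of $\Drond$, i.e., the elements of $\Drond_0$.

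For this, the plan is to adapt the algorithm of Durand de Gevigney et al., which already handles the orientation case, to super-orientations. The starting point is the canonical representation of the underlying DE graph by directed paths in a directed host tree $T$, which can be computed in polynomial time and in which each arc of $T$ corresponds to a maximal clique of the graph (the set of paths traversing that arc). The orientation-case algorithm traverses the arcs of $T$ in an order compatible with the directions of $T$ and, at each arc, selects the sink of the corresponding clique consistent with the choices already made. For a super-orientation, clique-acyclicity still guarantees at least one sink per clique, but several sinks may coexist, tied by bidirectional arcs; the algorithm is therefore modified to pick any sink of the current clique compatible with the kernel vertices already chosen.

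The main obstacle I expect is proving that a compatible sink always exists and that the process never gets stuck. Existence of a kernel in the whole digraph is guaranteed by the Boros-Gurvich theorem, but this is non-constructive. To make the argument algorithmic, one needs a local extendability statement: any partial kernel built on a prefix of the arcs of $T$ can be extended to a full kernel, provided the choices at each step are made among sinks of the corresponding clique. The proof should mimic the inductive argument of Durand de Gevigney et al., with the sink-uniqueness step replaced by the extra flexibility afforded by multiple sinks; controlled backtracking, whose depth is bounded by the number of bidirectional arcs inside cliques, should then suffice to keep the overall complexity polynomial.
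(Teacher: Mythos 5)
Your reduction to the clique-cutset-free case via Proposition~\ref{prop:atomes} matches the paper's first step (modulo the minor point that $\Drond_0$ should be taken closed under induced subdigraphs, e.g.\ as the class of induced subdigraphs of clique-cutset-free members of $\Drond$). But the heart of the proof --- a polynomial-time algorithm for the atoms --- is where your proposal has a genuine gap. The paper's key ingredient, which you are missing, is the structural fact from Monma and Wei that a DE graph \emph{without a clique-cutset is the line-graph of a bipartite multigraph}. Once this is known, kernels of clique-acyclic super-orientations of the atoms coincide with stable matchings in bipartite (multi)graphs and are computable directly by the Gale--Shapley algorithm; no new algorithmic machinery on the host tree is needed.

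Your alternative plan --- adapting the Durand de Gevigney et al.\ tree-traversal algorithm from orientations to super-orientations --- is only a sketch at the point where it matters. You explicitly defer the ``local extendability'' statement (that a partial selection of sinks on a prefix of the arcs of $T$ never gets stuck), and you offer no proof of it; this is precisely the nontrivial content of the theorem. Moreover, the complexity claim is not credible as stated: ``controlled backtracking whose depth is bounded by the number of bidirectional arcs inside cliques'' does not yield a polynomial bound, since that number can be $\Theta(n^2)$ and backtracking of that depth with any branching is a priori exponential. As written, the proposal reduces the theorem to an unproven lemma plus an unjustified running-time bound, so it does not constitute a proof.
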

 
\begin{proof} 
DE graphs without clique-cutset are line-graphs of bipartite multigraphs and kernels of clique-acyclic super-orientations of such graphs are computable in polynomial time with the Gale-Shapley algorithm. Proposition~\ref{prop:atomes} is applicable with $\Drond$ defined as the class of all clique-acyclic super-orientations of DE graphs and $\Drond_0$ defined as the class of clique-acyclic super-orientations of line-graphs of bipartite multigraphs.
\end{proof}


 \subsection{Non-clique-acyclic orientations}
\label{subsec:interNonCA}

We present additional results for non-necessarily clique-acyclic orientations of chordal and claw-free graphs.

\begin{proposition}\label{uniciteChordal}
Any orientation of a chordal graph has at most one kernel.
\end{proposition}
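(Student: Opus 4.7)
The plan is to assume two distinct kernels $K_1$ and $K_2$ of an orientation $D$ of a chordal graph $G$, and to derive a contradiction by producing an induced cycle of length at least four in $G$. Set $A = K_1 \setminus K_2$ and $B = K_2 \setminus K_1$. A preliminary observation is that neither $A$ nor $B$ can be empty: if, say, $K_1 \subsetneq K_2$, then any $v \in K_2 \setminus K_1$ would have to be absorbed into $K_1$ by some arc $(v,u)$ with $u \in K_1 \subseteq K_2$, contradicting stability of $K_2$.

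The key step is to show that every vertex of $A \cup B$ has an out-neighbor in $D[K_1 \triangle K_2]$ on the opposite side. Let $v \in A$; since $v \notin K_2$, the absorbing property of $K_2$ produces an arc $(v,w)$ with $w \in K_2$; since $v \in K_1$ and $K_1$ is stable, the neighbor $w$ cannot lie in $K_1$, hence $w \in B$. The symmetric argument works for vertices of $B$. Because $A$ and $B$ are non-empty and each vertex has an out-arc into the other side, following arcs in $D[K_1 \triangle K_2]$ eventually repeats a vertex, yielding a directed cycle of $D$ whose vertices alternate between $A$ and $B$. In particular, this cycle has even length; combined with the fact that an orientation contains no directed $2$-cycle, its length is at least $4$.

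It remains to use chordality to reach a contradiction. The cycle produced above is an (undirected) cycle in the induced subgraph $G[K_1 \triangle K_2]$, which is chordal because chordality is hereditary. Pick a shortest cycle $C$ of $G[K_1 \triangle K_2]$; by the previous paragraph we have $|C| \geq 4$. If $C$ had a chord inside $G[K_1 \triangle K_2]$, that chord would split $C$ into two strictly shorter cycles of $G[K_1 \triangle K_2]$, contradicting minimality. Therefore $C$ is an induced cycle of length at least $4$ in a chordal graph, the desired contradiction.

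The one step requiring a little care is the bookkeeping that every arc used in the construction actually lies inside $D[K_1 \triangle K_2]$ (which uses both stability of $K_i$ and the fact that an arc into $K_1 \cap K_2$ would violate stability somewhere); once this is written out cleanly, the rest is immediate from the definition of a chordal graph.
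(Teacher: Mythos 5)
Your proof is correct and rests on the same two ingredients as the paper's: the symmetric difference $K_1\Delta K_2$ induces a bipartite chordal subgraph, and absorption forces every vertex of it to have an out-arc to the other side. The paper packages this as ``chordal and bipartite implies forest, so the orientation of the symmetric difference has a sink, which is then unabsorbed,'' while you run the contrapositive, building a directed cycle and extracting a chordless cycle of length at least four; this is essentially the same argument in a different order.
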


\begin{proof}
Let $D$ be an orientation of a chordal graph $G$.
Assume for a contradiction that $D$ has two different kernels $K$ and $K'$. Consider the induced subgraph $G[K \Delta K']$: it is chordal since $G$ is chordal. It is also bipartite because $K\backslash K'$ and $K' \backslash K$ form a partition of the vertices into two stable sets. A graph that is both chordal and bipartite is cycle-free. Hence its orientation $D[K \Delta K']$ is a digraph without directed cycle. This digraph is not empty hence it contains a sink $s$, and w.l.o.g. $s \in K \backslash K'$. The vertex $s$ is not absorbed by any vertex in $K' \backslash K$, and by stability of $K$ it is not absorbed by a vertex in $K' \cup K$. This contradicts the fact that $K'$ is a kernel of $D$. 
\end{proof}

We noted that a similar proof leads to a new result on cardinality of kernels in claw-free graphs.

\begin{proposition}
\label{prop:cardClawFree}
All kernels of an orientation of a claw-free graph have the same size.
\end{proposition}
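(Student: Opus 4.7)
The plan is to adapt the argument of Proposition~\ref{uniciteChordal}: given two kernels $K,K'$ of an orientation $D$ of a claw-free graph $G$, study the subgraph on the symmetric difference $G[K\Delta K']$. As before, this subgraph is bipartite, because $K\setminus K'$ and $K'\setminus K$ are contained in the stable sets $K$ and $K'$ respectively. The new ingredient is that claw-freeness forces $G[K\Delta K']$ to have maximum degree at most $2$: if a vertex $v$ had three neighbors $a,b,c$, which necessarily lie on the opposite side of the bipartition and therefore form a stable set, then $\{v,a,b,c\}$ would induce a $K_{1,3}$ in $G$. Hence every connected component of $G[K\Delta K']$ is a path or an even cycle.

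The kernel property then provides a uniform out-degree constraint on $D[K\Delta K']$. Every vertex $v\in K\setminus K'$ must be absorbed by some vertex of $K'$, which by stability of $K$ lies in $K'\setminus K$, so $v$ has out-degree at least $1$ in $D[K\Delta K']$. Symmetrically, every vertex of $K'\setminus K$ has out-degree at least $1$. This immediately rules out path components: a path on $k\geq 2$ vertices has $k-1$ edges, so the sum of out-degrees is $k-1<k$ and some vertex of the path must have out-degree $0$. Therefore $G[K\Delta K']$ is a disjoint union of even cycles.

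Finally, on any even cycle component with $2m$ vertices, the sum of out-degrees equals the number of edges $2m$, so every vertex has out-degree exactly $1$ and the cycle is directed. Since the component is bipartite, it alternates between $K\setminus K'$ and $K'\setminus K$, contributing exactly $m$ vertices to each side. Summing over components yields $|K\setminus K'|=|K'\setminus K|$, hence $|K|=|K'|$. I do not foresee a real obstacle: the one slightly delicate step is the claw-free degree bound, and even that is immediate from the bipartition of $K\Delta K'$; the rest is a handshake-type degree count.
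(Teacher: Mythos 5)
Your proof is correct and follows essentially the same route as the paper: bipartiteness of $G[K\Delta K']$ plus claw-freeness gives maximum degree two, path components are excluded because they would contain a vertex with no out-neighbor in the symmetric difference (you phrase this as a degree-sum count, the paper as "the path has a sink"), and the remaining cycle components alternate between the two kernels. The only cosmetic remark is that your path argument as stated starts at $k\geq 2$, but the same count (or the out-degree requirement directly) also disposes of isolated vertices.
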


\begin{proof}
Let $D$ be an orientation of a claw-free graph $G$.
Consider two different kernels $K$ and $K'$ of $D$. The induced subgraph $G[K \Delta K']$ is claw-free, and it is bipartite. Hence each vertex in this graph has degree at most two. It follows that $G[K \Delta K']$ is a disjoint union of cycles and paths. Assume for a contradiction that one of the connected components of $G[K \Delta K']$ is a path $P$. This path has a sink $s$, and we can assume w.l.o.g. $s \in K \backslash K'$. Then $s$ is not absorbed by any other vertex in $K' \setminus K$. It is neither absorbed by a vertex in $K' \cup K$ by stability of $K$. This contradicts the fact that $K'$ is a kernel of $D$, hence such a path $P$ does not exist. Then $G[K \Delta K']$  is a disjoint union of cycles, which alternate between $K$ and $K'$, hence $|K| = |K'|$. 
\end{proof}

Note that this result generalizes a well-known property of stable matchings: all stable matchings have the same size, as noted by Schrijver \cite[Corollary 18.12a]{Schrijver}, who credits this remark to McVitie and Wilson~\cite{McVitieWilson}.

An orientation of a chordal graph does not necessarily have a kernel when the orientation is not clique-acyclic. Though, with the help of Proposition~\ref{uniciteChordal}, we prove that deciding the existence of a kernel and computing it if it exists is polynomial.

\begin{proposition}
\label{algoChordal}
Let $D=(V,A)$ be an orientation of a chordal graph. Deciding whether $D$ has a kernel and computing such a kernel if it exists can be performed in polynomial time.
\end{proposition}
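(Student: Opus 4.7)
The plan is to design a recursive algorithm based on the clique-cutset decomposition of chordal graphs, leveraging Proposition~\ref{uniciteChordal} (uniqueness of kernels) as the central tool. In the base case, when the underlying chordal graph $G$ has no clique-cutset, $G$ is a clique and $D$ is a tournament, so $D$ admits a kernel --- necessarily a single vertex --- if and only if $D$ has a sink, which is checkable in polynomial time. Otherwise, I would find (via Tarjan's algorithm~\cite{Tarjan}) a clique-cutset $C$ of $D$ and a piece $D[V_1 \cup C]$ that is an atom, i.e., a clique, and set $V_2 = V \setminus (V_1 \cup C)$.

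The structural step is the key. Since $V_1 \cup C$ is a clique and $V_1$ is non-empty with all its $G$-neighbors inside $V_1 \cup C$, any kernel $K$ of $D$ must intersect $V_1 \cup C$ in exactly one vertex $v$ that absorbs $V_1 \setminus \{v\}$ within the clique. The candidates for $v$ are the sink of $D[V_1]$ (if it exists) and the vertices $v \in C$ with $V_1 \subseteq N_D^-(v)$. For each such candidate, the remaining part of $K$ coincides with the unique kernel (by Proposition~\ref{uniciteChordal}) of a specific induced chordal subdigraph: namely $D[V_2 \cup C]$ with $K \cap C = \{v\}$ if $v \in C$, and $D[V_2 \cup R_v]$ disjoint from $R_v := C \setminus N_D^-(v)$ if $v \in V_1$. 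The algorithm first computes recursively the kernel $K_*$ of $D[V_2 \cup C]$ and analyzes the intersection $K_* \cap C$: if $K_* \cap C = \varnothing$ and $D[V_1]$ has a sink $v$, return $\{v\} \cup K_*$; if $K_* \cap C = \{v\}$ with $V_1 \subseteq N_D^-(v)$, return $K_*$; otherwise, try the remaining option by recursively computing the kernel of $D[V_2 \cup R]$ with $v'$ the sink of $D[V_1]$ (if any) and $R := C \setminus N_D^-(v')$, returning $\{v'\}$ together with this kernel if it is disjoint from $R$, and reporting ``no kernel'' otherwise.

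Correctness follows from the case analysis together with uniqueness: Proposition~\ref{uniciteChordal} guarantees that each encountered sub-problem has at most one kernel, so the verification step is unambiguous, and the enumeration of at most one $v \in V_1$ together with the $v$ revealed by $K_*$ exhausts the candidates. The main obstacle is to establish the polynomial running time: each recursive step may make two recursive calls (on the two sub-digraphs above), which naively would yield an exponential bound. Polynomiality is obtained by observing that the recursion produces a polynomially-sized family of sub-problems --- each an induced subdigraph determined by the clique-tree structure of $G$ --- and by exploiting the uniqueness property to memoize or amortize the work, so that the total number of distinct sub-problems to solve remains polynomial in $n$.
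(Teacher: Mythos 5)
Your structural analysis is essentially sound: in an orientation, a leaf atom $V_1\cup C$ of a chordal graph is a clique, so a kernel meets it in exactly one vertex $v$ absorbing $V_1\setminus\{v\}$, and the two candidate types ($v\in C$ with $V_1\subseteq N_D^-(v)$, or $v$ the sink of the tournament $D[V_1]$) together with Proposition~\ref{uniciteChordal} do pin down the residual kernel on the two subdigraphs $D[V_2\cup C]$ and $D[V_2\cup R]$. The genuine gap is the running time, and you have located it yourself without closing it. Your recursion makes, in the worst case, two calls on two \emph{different} induced subdigraphs ($V_2\cup C$ versus $V_2\cup R$ with $R=C\setminus N_D^-(v')$), each of which can again branch; the recurrence $T(n)\le 2T(n-1)+\mathrm{poly}(n)$ is exponential. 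The appeal to memoization is not a proof: the subproblems are arbitrary induced subdigraphs $D[W]$, and the residual set $W$ depends on the entire sequence of binary choices (keep all of $C$, or delete $C\cap N_D^-(v')$) made along the peeling order of the atoms. Nothing in your argument shows these sets collapse to polynomially many; a priori a chordal graph with $k$ leaf atoms can generate on the order of $2^k$ distinct residual sets. As written, the complexity claim is unsupported, and this is the heart of the proposition.

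The paper sidesteps the branching entirely by peeling a single \emph{simplicial vertex} $v$ rather than a whole atom. The key observation is that for \emph{any} kernel $K$ of $D$, the set $K\setminus\{v\}$ is a kernel of the one fixed subdigraph $D[U\cup N_D^+(v)]$ (where $U$ is the set of non-neighbors of $v$): if $v\in K$ this is immediate, and if $v\notin K$ then stability on the clique $N_D(v)$ forces $K\cap N_D^-(v)=\varnothing$, so $K$ itself lives in $U\cup N_D^+(v)$. Hence \emph{one} recursive call suffices; by uniqueness its output $K'$ must equal $K\setminus\{v\}$, so the only candidates for a kernel of $D$ are $K'$ and $K'\cup\{v\}$, both checkable directly. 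The recursion is then linear in depth with one child per node, and polynomiality is immediate. If you want to salvage your atom-based scheme, you would need either to prove a polynomial bound on the number of distinct residual subdigraphs or to show that the answer to the second subproblem can be deduced from the first without a fresh recursive call; absent that, switch to the single-vertex peeling.
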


\begin{proof}
The algorithm is the following.

If $D$ is a clique, then search for a sink: return it or answer that none exists.

If $D$ is not a clique, then proceed as follows.
Compute a \emph{simplicial vertex} $v$, i.e., a vertex whose neighborhood induces a clique. Such a vertex can be found in polynomial time. 
Let  $U$ be the set of the non-neighbors of $v$. Call recursively the algorithm on $D[U \cup N_D^+(v)]$.
\begin{enumerate}[label=(\alph*)]
\item\label{a} If $D[U \cup N_D^+(v)]$ has no kernel, then answer that $D$ has no kernel.
\item\label{b} Otherwise, the recursive call has returned a kernel $K'$ of $D[U \cup N_D^+(v)]$.
\begin{enumerate}[label=(\roman*)]
\item\label{i} If $K'$ is a kernel of $D$, then return $K'$.
\item If $K'\cup\{v\}$ is a kernel of $D$, then return $K'\cup\{v\}$.
\item Otherwise, answer that $D$ has no kernel.
\end{enumerate}
\end{enumerate}

This algorithm is obviously polynomial. Let us prove that it is correct.

We show that if $D$ has a kernel $K$, then $K\setminus\{v\}$ is a kernel of $D[U\cup N_D^+(v)]$. Consider a kernel $K$ of $D$. If $v\in K$, then $K\setminus\{v\}\subseteq U\cup N_D^+(v)$ and the vertices in $U\cup N_D^+(v)$ are not absorbed by $v$. If $v\notin K$, then $K\cap N_D^-(v)=\varnothing$ since $v$ has to be absorbed, and $K\setminus\{v\}=K\subseteq U\cup N_D^+(v)$. In both cases $K\setminus\{v\}$ is a kernel of $D[U\cup N_D^+(v)]$.
Thus in case~\ref{a}, the algorithm outputs the correct answer.

Moreover, note that it also implies that if $D$ has a kernel $K$, then $K'=K\setminus\{v\}$ by uniqueness of the kernel of $D[U \cup N^+_D(v)]$ (Proposition~\ref{uniciteChordal}). Hence $K'$ and $K' \cup \{v\}$ are the only possible kernels of $D$, which leads to the conclusion.
\end{proof}

A graph is a \emph{circular-arc graph} if it is the intersection graph of intervals on a circle. Circular-arc graphs are not necessarily perfect.

\begin{proposition}
Let $D=(V,A)$ be an orientation of a circular-arc graph. Deciding whether $D$ has a kernel and computing such a kernel if it exists can be performed in polynomial time.
\end{proposition}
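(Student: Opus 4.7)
The plan is to reduce the problem to a polynomial number of instances of the chordal-orientation kernel problem, which can then be solved by Proposition~\ref{algoChordal}. The idea is to cut the circle at a single point: the arcs crossing the cut form a clique, while the remaining arcs become intervals, producing an interval (hence chordal) graph.

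Concretely, assume that a circular-arc representation of the underlying graph $G$ is available (computable from $G$ in polynomial time). Pick any point $p$ on the circle and let $C = \{v \in V : p \in \mathrm{arc}(v)\}$. Then $C$ is a clique in $G$, and $G - C$ is an interval graph. For any kernel $K$ of $D$, stability forces $|K \cap C| \leq 1$, so I would enumerate the $|C|+1$ possibilities for $K \cap C$. For each guess, set
\[
W = \begin{cases} V \setminus C & \text{if } K \cap C = \varnothing, \\ V \setminus N_G[c] & \text{if } K \cap C = \{c\}. \end{cases}
\]
In both cases $G[W]$ is an induced subgraph of $G - C$, hence chordal. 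By Proposition~\ref{uniciteChordal} the digraph $D[W]$ has at most one kernel, and by Proposition~\ref{algoChordal} it can be computed (or shown not to exist) in polynomial time. Call the output $K'$; the algorithm then checks whether the candidate extends to a kernel of $D$, namely whether $K'$ absorbs every vertex of $C$ in the first case, or whether $K' \cup \{c\}$ is stable and absorbing in $D$ (which reduces to $K'$ absorbing every vertex of $N_D^+(c)$) in the second case.

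For correctness one verifies that if $D$ admits a kernel $K$, then $K \setminus (K \cap C)$ is a kernel of $D[W]$ for the correct guess, so by uniqueness the procedure recovers $K$ in that iteration. Conversely, any accepted candidate is directly seen to be stable and absorbing in $D$. The number of guesses is at most $|V|+1$, and each iteration costs a polynomial-time call to the chordal-graph algorithm plus a trivial verification, giving an overall polynomial complexity.

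The main subtlety is the extra absorption constraint imposed on $K'$ in the case $K \cap C = \{c\}$: the chordal-graph algorithm computes a kernel of $D[W]$ with no awareness of the out-neighbors of $c$ that must also be absorbed. This is precisely where Proposition~\ref{uniciteChordal} is essential: since $D[W]$ has a unique kernel, either this unique candidate satisfies the extra condition and yields a kernel of $D$, or no kernel extending the current guess exists and one moves on to the next one.
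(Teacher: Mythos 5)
Your proposal is correct and follows essentially the same route as the paper: cut the circular-arc representation at a point, observe that a kernel meets the resulting clique $C$ in at most one vertex, enumerate the $|C|+1$ choices, solve the residual interval-graph (hence chordal) instance via Proposition~\ref{algoChordal}, and use the uniqueness from Proposition~\ref{uniciteChordal} to conclude that the single returned candidate either extends to a kernel of $D$ or no kernel with that trace on $C$ exists. Your set $W$ coincides with the paper's $D_S$ (since $C\subseteq N_G[c]$), and your explicit verification conditions match the paper's ``$S\cup K_S$ absorbs all vertices'' check.
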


\begin{proof}
The algorithm is the following. Fix a point on the circle in the representation of $D$ and let $C$ be the clique of all intervals crossing this point. For every set $S \subseteq C$ with $|S| \leq 1$, consider $D_S$ the subdigraph induced by vertices not in $C $ and not neighbors of $S$. Search for a kernel $K_S$ of $D_S$: the digraph $D_S$ is an orientation of an interval graph, thus the algorithm of Proposition~\ref{algoChordal} is applicable. If $K_S$ exists and $S \cup K_S$ absorbs all vertices in $D$, return $S \cup K_S$.
After testing all sets $S$, if nothing has been returned so far, then answer that $D$ has no kernel.

For every $S$, the digraph $D_S$ is chordal, hence by Proposition~\ref{uniciteChordal} the kernel $K_S$ is its unique kernel. Any kernel $K$ of $D$ can be decomposed as $K = S \cup K_S$, with $S \subseteq C$, $|S| \leq 1$ and $K_S$ the kernel of the digraph $D_S$. This ensures that the algorithm always returns the kernel of $D$ if it exists.
\end{proof}


\bibliographystyle{plain}
\bibliography{kernel}

\end{document}